\newcommand{\real}{\mathbb{R}}
\newcommand{\parenth}[1]{ \left( #1 \right) }
\newtheorem{theorem}{Theorem}[section]
\newtheorem{lemma}[theorem]{Lemma}
\begin{document}

% Use the \preprint command to place your local institutional report
% number in the upper righthand corner of the title page in preprint mode.
% Multiple \preprint commands are allowed.
% Use the 'preprintnumbers' class option to override journal defaults
% to display numbers if necessary
%\preprint{}

%Title of paper
\title{Gradient Navigation Model for Pedestrian Dynamics}

% repeat the \author .. \affiliation  etc. as needed
% \email, \thanks, \homepage, \altaffiliation all apply to the current
% author. Explanatory text should go in the []'s, actual e-mail
% address or url should go in the {}'s for \email and \homepage.
% Please use the appropriate macro foreach each type of information

% \affiliation command applies to all authors since the last
% \affiliation command. The \affiliation command should follow the
% other information
% \affiliation can be followed by \email, \homepage, \thanks as well.
\author{Felix Dietrich}
\email{felix.dietrich@tum.de}
\affiliation{Department of Computer Science and Mathematics, Munich University of Applied Sciences, 80335 Munich, Germany}
\affiliation{Zentrum Mathematik, Technische Universit\"{a}t M\"{u}nchen, 85748 Garching, Germany}

\author{Gerta K\"{o}ster}
\email{gerta.koester@hm.edu}
\affiliation{Department of Computer Science and Mathematics, Munich University of Applied Sciences, 80335 Munich, Germany}

\date{\today}

%%%%%%%%%%%%%%%%%%%%%%%%%%%%%%%%%%%%%%%%%%%%%%%%%%%%%%%%%%%%%%%%%%%%%%
%% Abstract
%%%%%%%%%%%%%%%%%%%%%%%%%%%%%%%%%%%%%%%%%%%%%%%%%%%%%%%%%%%%%%%%%%%%%%
\begin{abstract}
We present a new microscopic ODE-based model for pedestrian dynamics: the Gradient Navigation Model. 
The model uses a superposition of gradients of distance functions to directly change the direction of the velocity vector. The velocity is then integrated to obtain the location. The approach differs fundamentally from force based models needing only three equations to derive the ODE system, as opposed to four in, e.g., the Social Force Model. 
Also, as a result, pedestrians are no longer subject to inertia. Several other advantages ensue: Model induced oscillations are avoided completely since no actual forces are present. The derivatives in the equations of motion are smooth and therefore allow the use of fast and accurate high order numerical integrators. At the same time, existence and uniqueness of the solution to the ODE system follow almost directly from the smoothness properties. 
In addition, we introduce a method to calibrate parameters by theoretical arguments based on empirically validated assumptions rather than by numerical tests. These parameters, combined with the accurate integration, yield simulation results with no collisions of pedestrians. 
Several empirically observed system phenomena emerge without the need to recalibrate the parameter set for each scenario: obstacle avoidance, lane formation, stop-and-go waves and congestion at bottlenecks. The density evolution in the latter is shown to be quantitatively close to controlled experiments. Likewise, we observe a dependence of the crowd velocity on the local density that compares well with benchmark fundamental diagrams.
\end{abstract}

% insert suggested PACS numbers in braces on next line
\pacs{}
% insert suggested keywords - APS authors don't need to do this
%\keywords{}

%\maketitle must follow title, authors, abstract, \pacs, and \keywords
\maketitle

%%%%%%%%%%%%%%%%%%%%%%%%%%%%%%%%%%%%%%%%%%%%%%%%%%%%%%%%%%%%%%%%%%%%%%
%% Introduction
%%%%%%%%%%%%%%%%%%%%%%%%%%%%%%%%%%%%%%%%%%%%%%%%%%%%%%%%%%%%%%%%%%%%%%
\section{Introduction}

%- ODE models are well suited to describe dynamic systems, are well studied and solvable numerically
%- floor fields often used in CAs and macroscopic, not in (microscopic) ODE models
%- ODE / force based models have problems with inertia and high computation time
%- parameter calibration for a large number of scenarios is often very difficult / impossible

Pedestrian flows are dynamical systems. Numerous models exist \citep{hamacher-2001,antonini-2006,chraibi-2011} on both the macroscopic \citep{hughes-2001,hoogendoorn-2004} and the microscopic level \citep{helbing-1995,chraibi-2010,seitz-2012}. In the latter, two approaches seem to dominate: ordinary differential equation (ODE) models and cellular automata (CA).

ODE models are particularly well suited to describe dynamical systems because they can formally and concisely describe the change of a system over time. The mathematical theory for ODEs is rich, both on the analytic and the numerical side.
In CA models, pedestrians are confined to the cells of a grid. They move from cell to cell according to certain rules. This is computationally efficient, but there is only little theory available \citep{boccara-2003}.
Many CA models employ a floor field to steer individuals around obstacles \cite{burstedde-2001,ezaki-2012}.

The use of floor fields for pedestrian navigation in ODE models is only sparingly described in literature. In \citep{helbing-1995}, pedestrians steer towards the edges of a polygonal path, in \citep{hoogendoorn-2003} optimal control is applied.
In addition, most of the ODE models are derived from molecular dynamics where the direction of motion is gradually changed by the application of a force. This leads to various problems mostly caused by inertia \citep{chraibi-2011, koster-2013}.

Cellular automata and more recent models in continuous space, like the Optimal Steps Model \citep{seitz-2012}, deviate from this approach and directly modify the direction of motion. This is also true for some ODE models in robotics, where movement is very controlled and precise and thus inertia is negligible \citep{starke-2002,starke-2011}.
The direct change of the velocity constitutes a strong deviation from molecular dynamics and hence from force based models.

This paper proposes an application of this model type to pedestrian dynamics: the Gradient Navigation Model (GNM). The GNM is a system of ODEs that describe movement and navigation of pedestrians on a microscopic level. Similar to CA models, pedestrians steer directly towards the direction of steepest decent on a given navigation function. This function combines a floor field and local information like the presence of obstacles and other pedestrians in the vicinity.

The paper is structured as follows: In the model section, three main assumptions about pedestrian dynamics are stated. They lead to a system of differential equations. A brief mathematical analysis of the model is given in the next section where we use a plausibility argument to reduce the number of free parameters in the ODE system from four to two. We constructed the model functions so that they are smooth. Thus, using standard mathematical arguments, existence and uniqueness of the solution follow directly.
In the simulations section the calibrated model is validated against several scenarios from empirical research: congestion in front of a bottleneck, lane formation, stop-and-go waves and speed-density relations. We also demonstrate computational efficiency using high-order accurate numerical solvers like MATLABs \textsf{ode45} \citep{dormand-1980b} that need the smoothness of the solution to perform correctly. We conclude with a discussion of the results and possible next steps.

%%%%%%%%%%%%%%%%%%%%%%%%%%%%%%%%%%%%%%%%%%%%%%%%%%%%%%%%%%%%%%%%%%%%%%
%% GNM part
%%%%%%%%%%%%%%%%%%%%%%%%%%%%%%%%%%%%%%%%%%%%%%%%%%%%%%%%%%%%%%%%%%%%%%

\section{\label{sec-GNM}Model}

The Gradient Navigation Model (GNM) is composed of a set of ordinary differential equations to determine the position $x_i\in\real^2$ of each pedestrian $i$ at any point in time. A navigational vector $\vec{N}_i$ is used to describe an individual's direction of motion.
The model is constructed using three main assumptions.

\begin{description}
\item[Assumption 1] Crowd behavior in normal situations is governed by the decisions of the individuals rather than by physical, interpersonal contact forces.
\end{description}
This assumption is based on the observation that even in very dense but otherwise normal situations, people try not to push each other but rather stand and wait.
%The historical change of the conception of rationality in large crowds in the social sciences also plays a role here. For a summary, see the introduction in \cite{hughes-2001}
It enables us to neglect physical forces altogether and focus on personal goals. If needed in the future, this assumption could be weakened and additional physics could be added similar to \cite{hoogendoorn-2003} who split up what they call the \textit{physical model} and the \textit{control model}.
Note that this assumption sets the GNM apart from models for situations of very high densities, where pedestrian flow becomes similar to a fluid \cite{hughes-2001,helbing-2007}.

\begin{description}
\item[Assumption 2] Pedestrians want to reach their respective targets in as little time as possible based on their information about the environment.
\end{description}
Most models for pedestrian motion are designed with this assumption. Differences remain regarding the optimality criteria for `little time' as well as the amount of information each pedestrian possesses. \cite{helbing-1995} uses a polygonal path around obstacles for navigation, \cite{hoogendoorn-2004b} solve a Hamilton-Jacobi equation, incorporating other pedestrians.
In this paper, we use the eikonal equation similar to \cite{hughes-2001, hartmann-2010} to find shortest arrival times $\sigma$ of a wave emanating from the target region. This allows us to compute the part of the direction of motion $\vec{N}_{i,T}$ that minimizes the time to the target:
\begin{equation}\label{eq:NT}
\vec{N}_{i,T}=-\nabla \sigma
\end{equation}

\begin{description}
\item[Assumption 3] Pedestrians alter their desired velocity as a reaction to the presence of surrounding pedestrians and obstacles. They do so after a certain reaction time.
\end{description}
The relation of speed and local density has been studied numerous times and its existence is well accepted. The actual form of this relation, however, differs between cultures, even between different scenarios \cite{seyfried-2005,chattaraj-2009,jelic-2012}.
Note that assumption 3 not only claims the existence of such a relation but makes it part of the thought process. In our model, we implement this by modifying the desired direction of motion with a vector $\vec{N}_{i,P}$ so that pedestrians keep a certain distance from each other and from obstacles. In models using velocity obstacles, this issue is addressed further \cite{fiorini-1998,shiller-2001,berg-2011,curtis-2014}.
Attractants like windows of a store or street performers could also be modelled as proposed by \cite[p. 49]{molnar-1996}, but are not considered in this paper.
\begin{equation}\label{eq:NP}
\vec{N}_{i,P}=-(\underbrace{\sum_{j\neq i} \nabla P_{i,j}}_\text{influence of pedestrians} + \underbrace{\sum_B \nabla P_{i,B}}_\text{influence of obstacles})
\end{equation}
$\nabla P_{i,j}$ and $\nabla P_{i,B}$ are gradients of functions that are based on the distance to another pedestrian $j$ and obstacle $B$ respectively. Their norm decreases monotonically with increasing distance. To model this, we introduce a smooth exponential function with compact support $R>0$ and maximum $p/e>0$ (see Fig. \ref{fig:calibrate_potentials}):
\begin{equation}\label{eq:h}
h(r;R,p)=\begin{cases}
p\ \text{exp}\parenth{\frac{1}{(r/R)^2-1}}&|r/R|<1\\
0&\text{otherwise}
\end{cases}
\end{equation}
\begin{figure}
\includegraphics[width=0.4\textwidth]{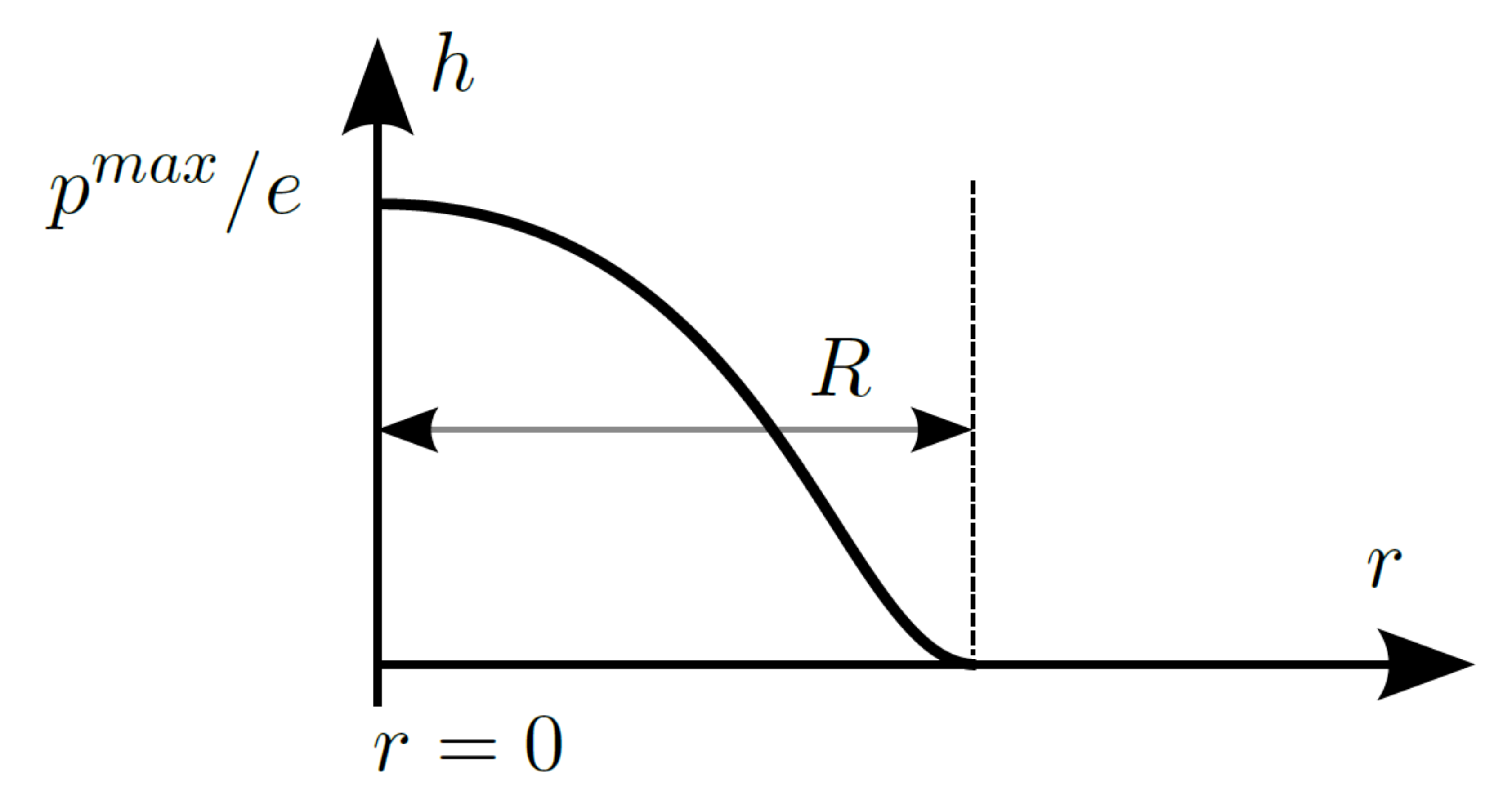}
\caption{\label{fig:calibrate_potentials}The graph of $h$, which depends on the distance between pedestrians $r$ as well the maximal value $p/e$ and support $R$ of $h$.}
\end{figure}
To take the viewing angle of pedestrians into account, we scale $\nabla P_{i,j}$ by
\begin{equation}
s_{i,j}=\tilde{g}(\text{cos}(\kappa\phi_{i,\sigma}-\kappa\phi_j))
\end{equation}
The function $\tilde{g}$ is a shifted logistic function (see appendix \ref{eq:tildeg}) and $(\phi_{i,\sigma}-\phi_j)$ is the angle between the direction $\vec{N}_{i,T}$ and the vector from $x_i$ to $x_j$ (see Fig. \ref{fig:viewingdirection}). $\kappa$ is a positive constant to set the angle of view to $\approx 200\deg$.
\begin{figure}
\includegraphics[width=0.4\textwidth]{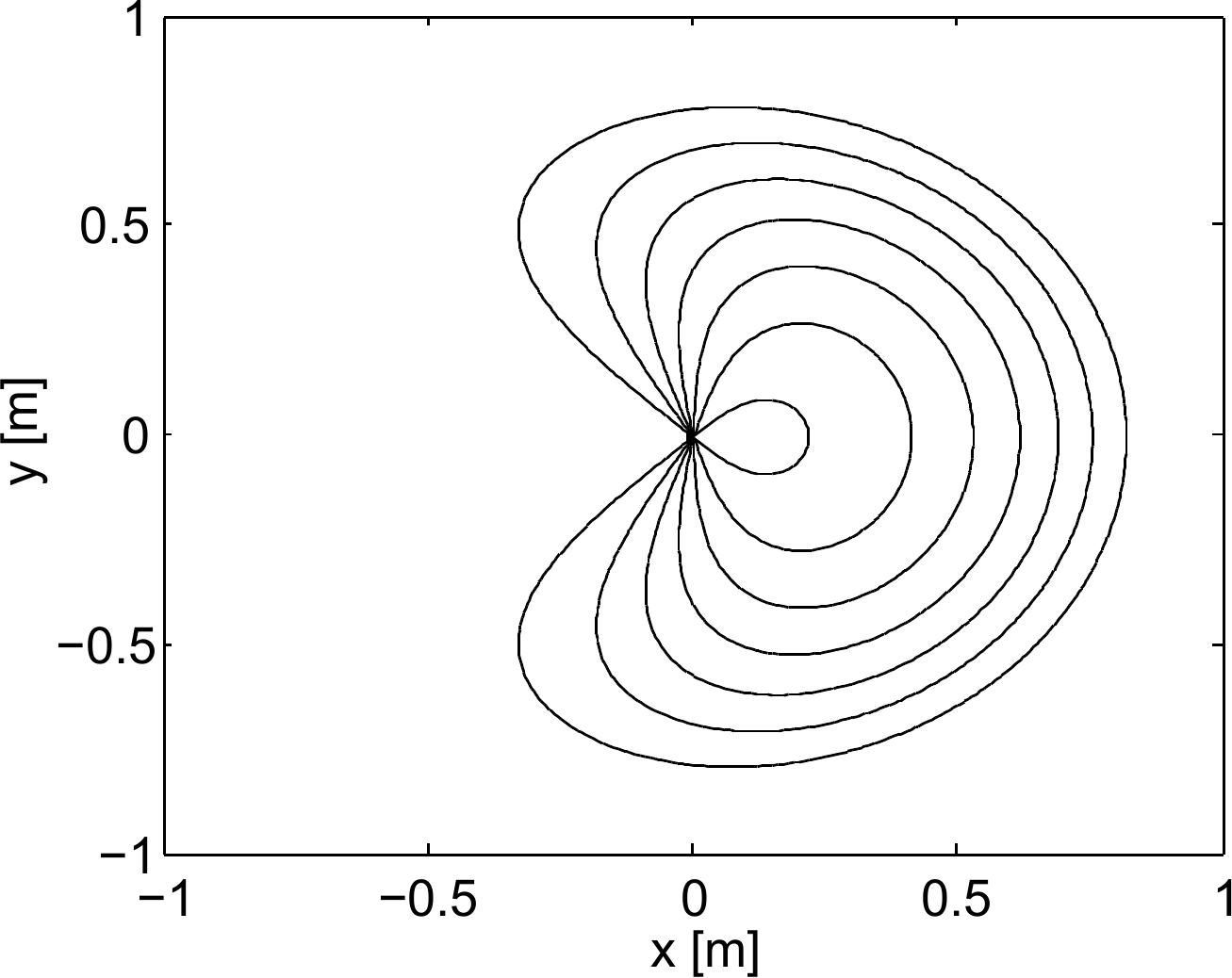}
\caption{\label{fig:viewingdirection}Isolines of the function $s_{i,j}h(\|x_i-x_j\|;1,1)$ with $x_i=0$ and $x_j\in\real^2$. This function represents the field of view of a pedestrian in the origin together with his or her comfort zone. If $x_j$ is close and in front of $x_i$, the function values are maximal, meaning least comfort for $x_i$.}
\end{figure}
Using $h$ and $s_{i,j}$ (see Fig. \ref{fig:viewingdirection}), the gradients in Eq. \ref{eq:NP} are now defined by
\begin{eqnarray}\label{eq:gradient_deltaP}
\nabla P_{i,j}&=&h_\epsilon(\|x_i-x_j\|;p_j,R_j)s_{i,j}\frac{x_j-x_i}{\|x_j-x_i\|}\\\label{eq:gradient_deltaO}
\nabla P_{i,B}&=&h_\epsilon(\|x_i-x_B\|;p_B,R_B)\frac{x_B-x_i}{\|x_B-x_i\|}
\end{eqnarray}
where $p_j,R_j,p_B,R_B$ are positive constants that represent comfort zones between pedestrian $i$, pedestrian $j$ and obstacle $B$. To avoid (mostly artificially induced) situations when pedestrians stand exactly on top of each other \cite{koster-2013}, we replace $h$ by $h_\epsilon$:
\begin{equation}
h_\epsilon(\|x_i-x_j\|;p,R)=h(\|x_i-x_j\|;p,R)-h(\|x_i-x_j\|;p,\epsilon)
\end{equation} where $\epsilon>0$ is a small constant. For $\epsilon\to 0$, $h_\epsilon(\cdot;p,R)\to h(\cdot;p,R)$. For $\|x_i-x_j\|=0$, we also define $\nabla P_{i,j}=0$ and $\nabla P_{i,B}=0$.

To validate the second part of assumption 3, we use the result of \cite{moussaid-2009b}: pedestrians undergo a certain reaction time $\tau$ between an event that changes their motivation and their action. The relaxed speed adaptation is modeled by a multiplicative, time dependent, scalar variable $w:\real^+_0\to\real$, which we call relaxed speed. Its derivative with respect to time, $\dot{w}$, is similar to acceleration in one dimension.

Eq. \ref{eq:NT} and \ref{eq:NP} enable us to construct a relation between the desired direction $\vec{N}$ of a pedestrian and the underlying floor field as well as other pedestrians:
\begin{equation}\label{eq:naviation_vector}
\vec{N}=g(g(\vec{N}_T)+g(\vec{N}_P))
\end{equation}
The function $g:\real^2\to\real^2$ scales the length of a given vector to lie in the interval $[0,1]$. For the exact formula see appendix A. Note that with definition (\ref{eq:naviation_vector}), $N$ must not always have length one, but can also be shorter. This enables us to scale it with the desired speed of a pedestrian to get the velocity vector:
\begin{equation}
\dot{\vec{x}}=\vec{N} w
\end{equation}

With initial conditions $\vec{x}_0=\vec{x}(0)$ and $w_0=w(0)$ the Gradient Navigation Model is given by the equations of motion for every pedestrian $i$:
\begin{equation}\label{eq:GNMequations}
\begin{array}{rcl}
\dot{\vec{x}}_i(t)&=& w_i(t)\vec{N}_i(\vec{x}_i,t)\\\
\dot{w}_i(t)&=&\frac{1}{\tau}\parenth{v_i(\rho(\vec{x}_i))\|\vec{N}_i(\vec{x}_i,t)\|-w_i(t)}\\
\end{array}
\end{equation}

The position $\vec{x}_i:\real\to\real^2$ and the one-dimensional relaxed speed $w_i:\real\to\real$ are functions of time $t$.
$v_i(\rho(\vec{x}_i))$ represents the individuals' desired speeds that depends on the local crowd density $\rho(\vec{x}_i)$ (see assumption 3).  Since the reason for the relation between velocity and density is still an open question \cite{seyfried-2005,jelic-2012}, we choose a very simple relation in this paper: we use  $v_i(\rho)$ constant and normally distributed with mean $1.34$ and standard deviation $0.26$, i.e. $v_i(\rho)=v_i^\text{des}\sim N(1.34,0.26)$. The choice of this distribution is based on a meta-study of several experiments \cite{weidmann-1993}.

%$\sigma : \Omega\to \real$ represents the arrival time to the closest target region around static obstacles (see assumption 2).

With these equations, the direction of pedestrian $i$ changes independently of physical constraints, similar to heuristics in \cite{moussaid-2011}, many CA models and the Optimal Steps Model \cite{seitz-2012}. The speed in the desired direction is determined by the norm of the navigation function $\vec{N}_i$ and the relaxed speed $w_i$.

\section{\label{subsec:staticnavigationfield}The navigation field}
%A polygonal path is used by \cite{helbing-1995} to steer pedestrians to the target. %As the spacial derivatives of this path are not defined at the edges, it is not a good choice for a differential equation model.
%A remedy would be to smooth the path at the edges.
Similar to \cite{hughes-2001} and later \cite{hoogendoorn-2003,hartmann-2010}, we use the solution $\sigma:\real^2\to\real$ to the eikonal equation (\ref{eq:eikonal_equation}) to steer pedestrians to their targets. $\sigma$ represents first arrival times (or walking costs) in a given domain $\Omega\subset\real^2$:
\begin{equation}\label{eq:eikonal_equation}
\begin{array}{rcl}
G(x)\|\nabla \sigma(x)\| &=& 1,\ x\in\Omega\\
\sigma(x)&=&0,\ x\in\Gamma\subset\partial\Omega
\end{array}
\end{equation}

%This is a non-linear, first-order partial differential equation with solution $\sigma$, which represents the first arrival time of a wave at $x$.

$\Gamma\subset\partial\Omega$ is the union of the boundaries of all possible target regions for one pedestrian. Static properties of a geometry (for example rough terrain or an obstacle) can be modelled by modifying the speed function $G:\real^2\to (0,+\infty)$. \cite{hartmann-2014b} include the pedestrian density in $G$. This enables pedestrians to locate congestions and then take a different exit route. \cite{treuille-2006} used the eikonal equation to steer very large virtual crowds.

If $G(x)=1\ \forall x$, $\sigma$ represents the length of the shortest path to the closest target region. This does not take into account that pedestrians can not get arbitrarily close to obstacles. Therefore, we slow down the wave close to obstacles by reducing $G$ in the immediate vicinity of walls. The influence of walls on $\sigma$ is chosen similar to $\|\nabla P_{i,B}\|$, so that pedestrians incorporate the distance to walls into their route.
 %Fig. \ref{fig:eikonal_complex3d.png} shows $\sigma$ for a simple geometry with $G=1$ everywhere.
 
Being a solution to the eikonal equation (\ref{eq:eikonal_equation}), the floor field $\sigma$ is Lipschitz-continuous \cite{evans-1997}. In the given ODE setting, however, it is desirable to smooth $\sigma$ to ensure differentiability and thus existence of the gradient at all points in the geometry. We employ mollification theory \cite{evans-1997} with a mollifier $\eta$ (similar to $h$ in Eq. \ref{eq:h}) on compact support $B(x)$ to get a mollified $\nabla\sigma$, which we call $\nabla\tilde{\sigma}$:
\begin{equation}\label{eq:smooth_nablasigma}
\nabla\tilde{\sigma}(x)=\nabla(\eta * \sigma)(x)=\int_{B(x)} \nabla\eta(y)\sigma(x-y)dy\in C^\infty(\real^2,\real^2)
\end{equation}

%%%%%%%%%%%%%%%%%%%%%%%%%%%%%%%%%%%%%%%%%%%%%%%%%%%%%%%%%%%%%%%%%%%%
%%%%%%%%%%%%%%%%%%%%%%%%%%%%%%%%%%%%%%%%%%%%%%%%%%%%%%%%%%%%%%%%%%%%

%%%%%%%%%%%%%%%%%%%%%%%%%%%%%%%%%%%%%%%%%%%%%%%%%%%%%%%%%%%%%%%%%%%%%%
%% analysis
%%%%%%%%%%%%%%%%%%%%%%%%%%%%%%%%%%%%%%%%%%%%%%%%%%%%%%%%%%%%%%%%%%%%%%
%%
%%

\section{\label{sec-ANA}Mathematical Analysis and Calibration}

Existence and uniqueness of a solution to Eq. \ref{eq:GNMequations} follows from the theorem of Picard and Lindeloef when using the method of vanishing viscosity to solve the eikonal equation \cite{evans-1997} and mollification theory to smooth $\nabla\sigma$ (see Eq. \ref{eq:smooth_nablasigma}).

The system of equations in Eq. \ref{eq:GNMequations} contains several parameters.
\cite{moussaid-2009b,johansson-2007} conducted experiments to find the parameters $\tau$ (relaxation constant, $\approx 0.5$) and $\kappa$ (viewing angle, $\approx 200\deg$, which corresponds to a value of $\kappa\approx 0.6$ here). The following free parameters remain:
\begin{itemize}
\item $p_p$ and $R_p$ define maximum and support of the norm of the pedestrian gradient $\|\nabla P_{i,j}\|$
\item $p_B$ and $R_B$ define maximum and support of the norm of the obstacle gradient $\|\nabla P_{i,B}\|$
\end{itemize}
We use an additional assumption to find relations between these four free parameters:
\begin{description}
\item[Assumption 4] A pedestrian who is enclosed by four stationary other persons on the one side and by a wall on the other side, and who wants to move parallel to the wall, does not move in any direction (see Fig. \ref{fig:calibrate_model}).
\end{description}
This scenario is very common in pedestian simulations and involves many elements that are explicitly modeled: other pedestrians, walls and a target direction. The setup also includes other scenarios: when the wall is replaced by two other pedestrians, the one in the center also does not move if assumption 4 holds. This is because the vertical movement is canceled out by the symmetry of the scenario.
\begin{figure}
\begingroup%
  \makeatletter%
  \providecommand\rotatebox[2]{#2}%
  \ifx\svgwidth\undefined%
    \setlength{\unitlength}{160bp}%
    \ifx\svgscale\undefined%
      \relax%
    \else%
      \setlength{\unitlength}{\unitlength * \real{\svgscale}}%
    \fi%
  \else%
    \setlength{\unitlength}{\svgwidth}%
  \fi%
  \global\let\svgwidth\undefined%
  \global\let\svgscale\undefined%
  \makeatother%
  \begin{picture}(1,1)%
    \put(0,0){\includegraphics[width=\unitlength]{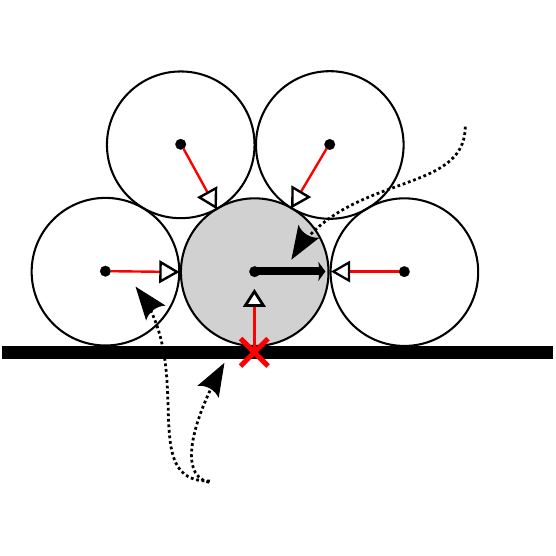}}%
    \put(0.3871858,0.09985346){\color[rgb]{0,0,0}\makebox(0,0)[lb]{\smash{-$\nabla\delta$}}}%
    \put(0.82093761,0.79565877){\color[rgb]{0,0,0}\makebox(0,0)[lb]{\smash{-$\nabla\sigma$}}}%
  \end{picture}%
\endgroup
\caption{\label{fig:calibrate_model}(Color online) Setup used to reduce the number of parameters. The pedestrian in the center (gray) wants to reach a target on the right (black, thick arrow, $-\nabla\sigma$) but is enclosed by four other pedestrians who are not moving and a wall (thick line at the bottom). Together, the pedestrians and the wall act on the gray pedestrian via $-\nabla\delta$ (sum of red, slim arrows). The red cross on the wall marks the position on the wall that is closest to the gray pedestrian.}
\end{figure}
Using assumption 4, we can simplify the system of equations (\ref{eq:GNMequations}) to find dependencies between parameters. First, the direction vectors $\vec{N}_{i,T}$ and $\vec{N}_{i,P}$ are computed based on the given scenario. The gray pedestrian wants to walk parallel to the wall in positive x-direction, that means
\begin{eqnarray}\label{eq:calibrate_sigma}
\vec{N}_{i,T}(x_i)=-\nabla\sigma(x_i)&=&\left[\begin{array}{c}
1\\0
\end{array}\right]
\end{eqnarray}
The remaining function $\vec{N}_{i,P}$ is composed of the repulsive effect of the four enclosing pedestrians and the wall. We simplify equations \ref{eq:gradient_deltaP} and \ref{eq:gradient_deltaO} by taking the limit $\epsilon\to 0$, which is reasonable since the pedestrians do not overlap in the scenario.
\begin{align}\label{eq:calibrate_delta}
\vec{N}_{i,P}=\underbrace{\sum_{i=1}^4 h_p(\|x_\text{gray}-x_i\|)s_{\text{gray},j}\frac{x_\text{gray}-x_j}{\|x_\text{gray}-x_j\|}}_{\text{influence of the `white' pedestrians}}\nonumber\\
+\underbrace{h_B(\|x_\text{gray}-x_B\|)\left[\begin{array}{c}
0\\1
\end{array}\right]}_{\text{influence of the wall}}
\end{align}
Using Eq. (\ref{eq:calibrate_sigma}), (\ref{eq:calibrate_delta}) and assumption 4 the system of differential equations (\ref{eq:GNMequations}) for the pedestrian in the center yield
\begin{eqnarray}
\label{eq:calibration_xdot}\dot{\vec{x}}_i=w \vec{N}_i=-w g(g(\vec{N}_{i,T})+g(\vec{N}_{i,P}))&=&0\\
\dot{w}=\frac{1}{\tau}(v(\rho)\|N\|-w)&=&0
\end{eqnarray}
The second equality yields
\begin{equation}
w=v(\rho)\|\vec{N}\|\implies w\geq 0
\end{equation}
Since assumption 4 does not imply $w=0$, Eq. (\ref{eq:calibration_xdot}) holds true generally if
\begin{equation}\label{eq:calibrate_sigmadelta}
g(\vec{N}_{i,T})=-g(\vec{N}_{i,P})
\end{equation}
Since all $\phi_i$, $x_i$, $x_\text{gray}$ and $x_B$ are known in the given scenario, the only free variables in Eq. (\ref{eq:calibrate_sigmadelta}) are the free parameters of the model: the height and width of $h_p$ (named $p_p$ and $R_p$), as well as $h_B$ (named $p_B$ and $R_B$). With only two equations for four parameters, system (\ref{eq:calibrate_sigmadelta}) is underdetermined and thus we choose $R_B=0.25$ (according to \citep{weidmann-1993}) and $R_p=\sqrt{3}r(\rho_\text{max})$, where $r(\rho_\text{max})$ is the distance of pedestrians in a dense lattice with pedestrian density $\rho_\text{max}$. This choice for $R_p$ ensures that pedestrians adjacent to the enclosing ones have no influence on the one in the center. Note that if this condition is weakened in assumption 4, the model behaves  differently on a macroscopic scale (see Fig. \ref{fig:FD2D_speed_twoneighbors}).

With two of the four parameters fixed, we use Eq. (\ref{eq:calibrate_sigmadelta}) to fix the remaining two. Table \ref{tab:calibrate_parameters} shows numerical values of all parameters, assuming $\rho_{max}=7 P/m^2$, which leads to $r(\rho_{max})\approx 0.41$.
\begin{table}
\begin{tabular}{c|c|l}
Parameter&Value&Description\\
\hline
$\kappa$&0.6&Viewing angle\\
$\tau$&0.5&Relaxation constant\\
\hline
$p_p$&3.59&Height of $h_p$\\
$p_B$&9.96&Height of $h_B$\\
\hline
$R_p$&0.70&Width of $h_p$\\
$R_B$&0.25&Width of $h_B$\\
\end{tabular}
\caption{\label{tab:calibrate_parameters}Numerical values of all parameters of the Gradient Navigation Model using assumption 4 and $\rho_{max}=7P/m^2$. The first two were determined by experiment \cite{johansson-2007,moussaid-2009b}.}
\end{table}

%%%%%%%%%%%%%%%%%%%%%%%%%%%%%%%%%%%%%%%%%%%%%%%%%%%%%%%%%%%%%%%%%%%%%%
%% validation
%%%%%%%%%%%%%%%%%%%%%%%%%%%%%%%%%%%%%%%%%%%%%%%%%%%%%%%%%%%%%%%%%%%%%%

\section{\label{sec-VALIDATION}Simulations}

To solve Eq. (\ref{eq:GNMequations}) numerically, we use the step-size controlling Dormand-Prince-45 integration scheme \cite{dormand-1980b} with $\text{tol}_\text{abs}=10^{-5}$ and $\text{tol}_\text{rel}=10^{-4}$. Employing this scheme is possible because the derivatives are designed to depend smoothly on $x$, $w$ and $t$. Unless otherwise stated, all simulations use the parameters given in Tab. \ref{tab:calibrate_parameters}. The desired speeds $v_i^\text{des}$ are normally distributed with mean $1.34ms^{-1}$ and standard deviation $0.26ms^{-1}$ as observed in experiments \cite{weidmann-1993}. $v_i^\text{des}$ is cut off at $0.3ms^{-1}$ and $3.0ms^{-1}$ to avoid negative or unreasonably high values.
We used the fast marching method \cite{sethian-1999} to solve the eikonal equation (Eq. \ref{eq:eikonal_equation}). The mollification of $\nabla\sigma$ (Eq. \ref{eq:smooth_nablasigma}) is computed using Gauss-Legendre quadrature with $21\times21$ grid points.
All simulations were conducted on a machine with an Intel Xeon (R) X5672 Processor, 3.20 Ghz and with the Java-based simulator VADERE. Simulations of scenarios with over 1000 pedestrians were possible in real time under these conditions.

We validate the model quantitiatively by comparing the flow rates of 180 simulated pedestrians in a bottleneck scenario (see Fig. \ref{fig:liddle_cone}) of different widths with experimentally determined data from \cite{kretz-2006,seyfried-2009,liddle-2011}. The length of the bottleneck is $4m$ in all runs.
Fig. \ref{fig:flow_liddle} shows that, regarding flow rates, the simulation is in good quantitative agreement with data from \cite{kretz-2006,seyfried-2009,liddle-2011} for all bottleneck widths.
\begin{figure}
\includegraphics[width=0.35\textwidth]{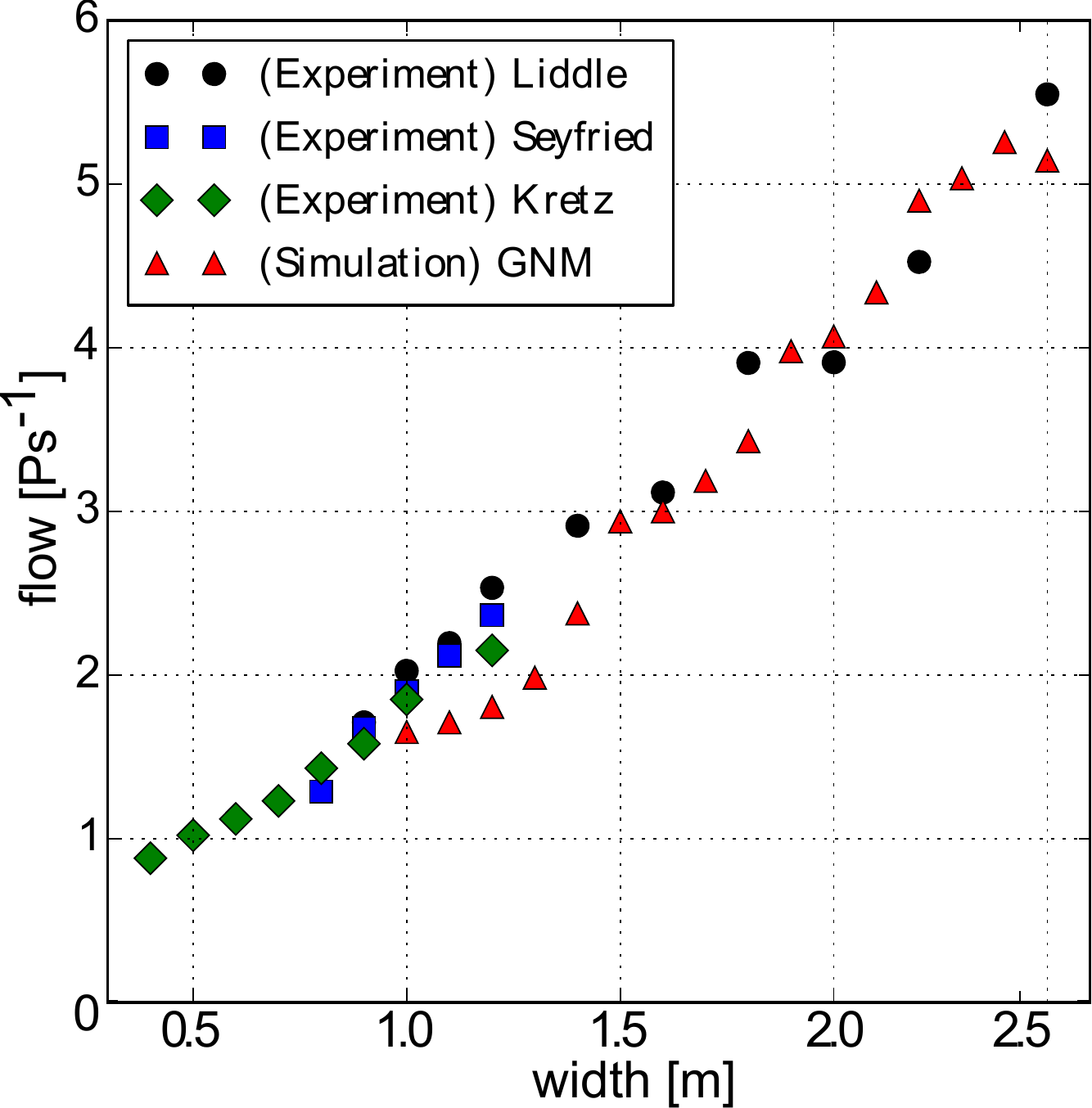}
\caption{\label{fig:flow_liddle}(Color online) Flow rate of the GNM compared to experiments of Kretz \cite{kretz-2006}, Seyfried \cite{seyfried-2009} and Liddle \cite{liddle-2011}. We use the parameters from Tab. \ref{tab:calibrate_parameters} and the normal distribution $N(1.34ms^{-1},0.26ms^{-1})$ to find desired velocities as proposed by \cite{weidmann-1993}.}
\end{figure}
Also, the formation of a crowd in front of a bottleneck matches observations well (see Fig. \ref{fig:liddle_cone}): in front of the bottleneck, they form a cone as observed by \cite{kretz-2006,seyfried-2009b,schadschneider-2011b}. Note that this is different from the behaviour described in \cite{helbing-2000} that tries to capture the dynamics in stress situations. Our simulations suggest that the desired velocity is the most important parameter for this experiment: when we change its distribution to $N(1.57ms^{-1},0.15ms^{-1})$ as found by \cite{gerhardt-2011}, the flow is $\approx 1s^{-1}$ higher for small widths and $\approx 1s^{-1}$ lower for larger widths.
\begin{figure}
\includegraphics[width=0.4\textwidth]{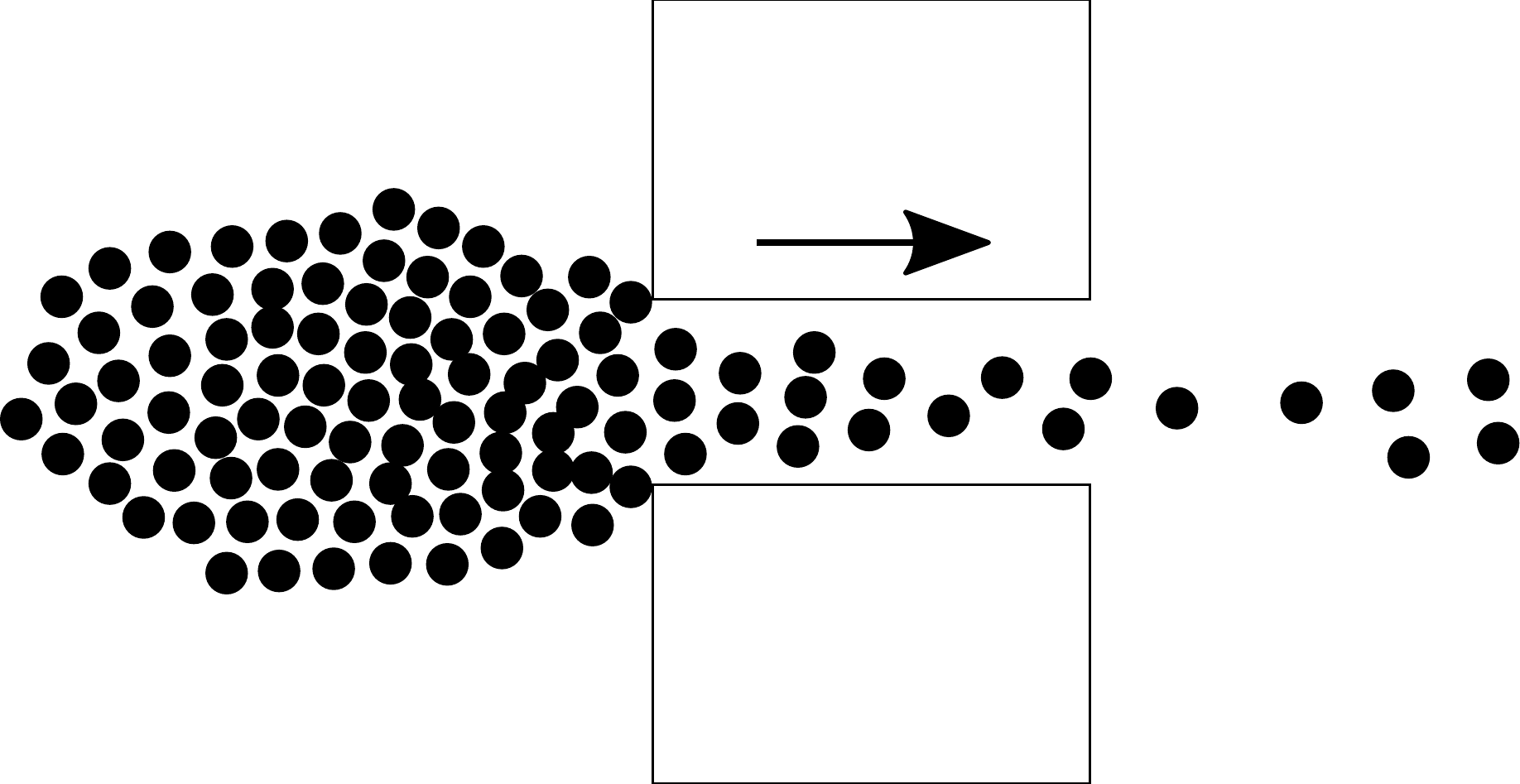}
\caption{\label{fig:liddle_cone}The pedestrians in the GNM simulation form a cone in front of the bottleneck as observed by \cite{kretz-2006,seyfried-2009b,schadschneider-2011b}.}
\end{figure}
%Liddle (no intermediate targets needed, nice triangular shape in front of the bottleneck, compare flows + evacuation times, write Mohcine about the problem with too high evac times he discovered)

The GNM can be calibrated to match the relation of speed and density in a given fundamental diagram. Fig. \ref{fig:FD2D_speed_oneneighbor} shows that for the calibration with only one layer of neighbors, pedestrians do not slow down with increasing densities as quickly as suggested in \cite{weidmann-1993}. When calibrating with one additional layer of pedestrians in the scenario shown in Fig. \ref{fig:calibrate_model}, the curves match much better (see Fig. \ref{fig:FD2D_speed_twoneighbors}). We use the method introduced by \cite{liddle-2011} to measure local density.
\begin{figure}
\includegraphics[width=0.4\textwidth]{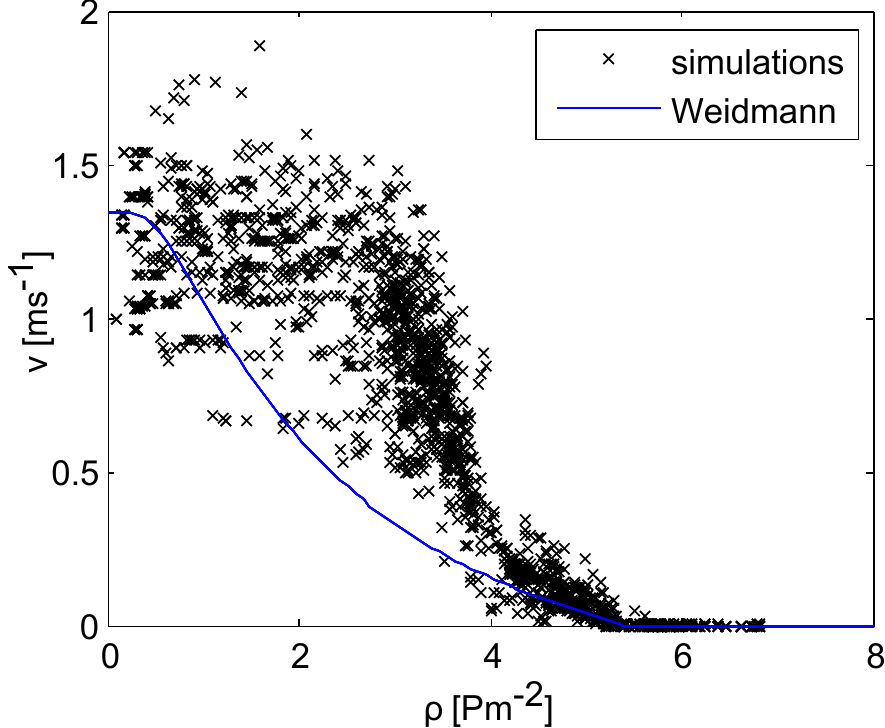}
\caption{\label{fig:FD2D_speed_oneneighbor}(Color online) Speed-density relation in unidirectional flow compared to experimental data from metastudy (Weidmann, \cite{weidmann-1993}). The corridor was $40m$ long and $4m$ wide with periodic boundary conditions. Each cross (labeled `simulation') represents a local measurement at the position of a pedestrian. We use the method introduced by \cite{liddle-2011} to measure local density. The parameter set in these simulations was fixed with the procedure shown in Fig. \ref{fig:calibrate_model} and thus incorporates one layer of four pedestrians.}
\end{figure}
\begin{figure}
\includegraphics[width=0.4\textwidth]{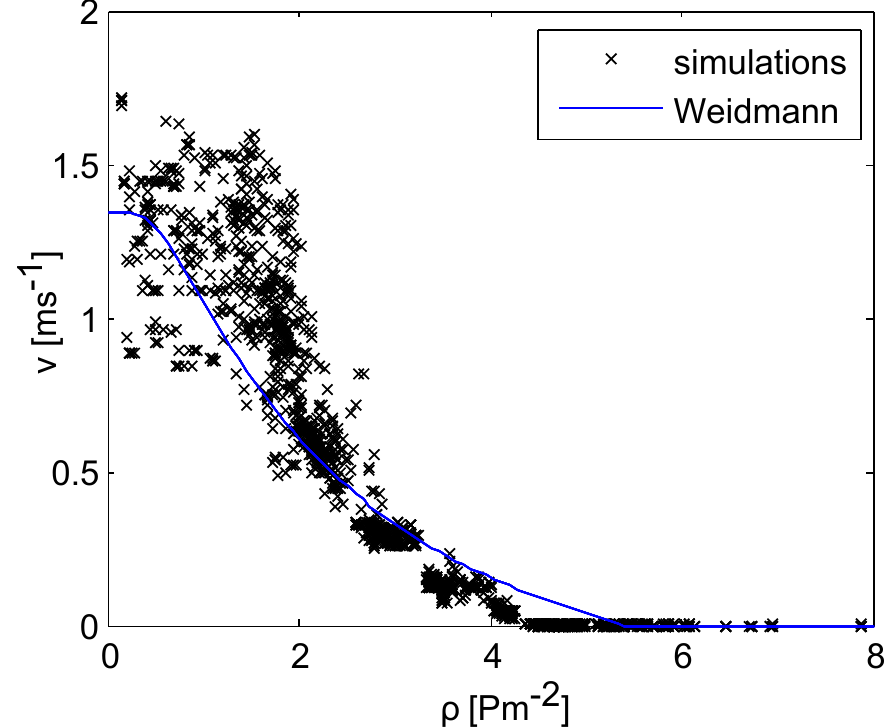}
\caption{\label{fig:FD2D_speed_twoneighbors}(Color online) Speed-density relation in unidirectional flow compared to experimental data from metastudy (Weidmann, \cite{weidmann-1993}). The corridor was $40m$ long and $4m$ wide with periodic boundary conditions. Each cross (labeled `simulation') representsa local measurement at the position of a pedestrian. We use the method introduced by \cite{liddle-2011} to measure local density. The parameter set in these simulations was adjusted with a similar procedure as in Fig. \ref{fig:calibrate_model} to incorporate neighbors of neighbors in the computation of $\nabla\delta$: $R_p=1.0$, $p_p=1.79$, $R_B=0.25$ and $p_B=11.3$.}
\end{figure}

\cite{gaididei-2013,marschler-2013} compute the deviation of distances between drivers to analyze stop-and-go waves in car traffic. No deviation implies no stop-and-go waves since all distances are equal. A large deviation hints at the existence of a wave since there must be regions with large and regions with small distances between drivers. For pedestrian dynamics \cite{helbing-2007, portz-2011, jelic-2012} found stop-and-go waves experimentally. Similar to the wave analysis in traffic, we use the deviation of individual speeds to measure stop-and-go waves. Fig. \ref{fig:stopandgo_mu_sigma} and \ref{fig:stopandgo_scenario} shows that the GNM also produces stop-and-go waves when a certain global density is reached.
\begin{figure}
\includegraphics[width=0.4\textwidth]{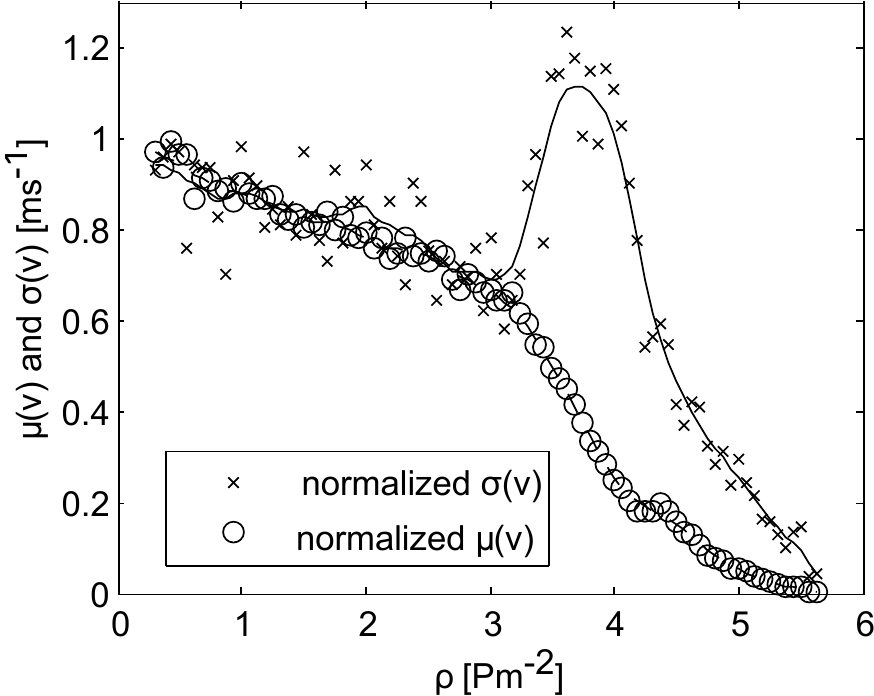}
\caption{\label{fig:stopandgo_mu_sigma}Normalized standard deviation $\sigma(v)/0.26$ (line) and mean $\mu(v)/1.34$ (dashed line) of individual speeds in a unidirectional walkway with differing global densities $\rho$. Both the data points of the simulations and zero-phase digital filtering curves (width: five data points) are shown. The peak of the standard deviation at $\rho=4Pm^{-2}$ indicates stop-and-go waves: even though the mean speed decreases, the speed differences increase, which means that there are regions with low as well as high speeds present at the same time.}
\end{figure}
\begin{figure}
\includegraphics[width=0.4\textwidth]{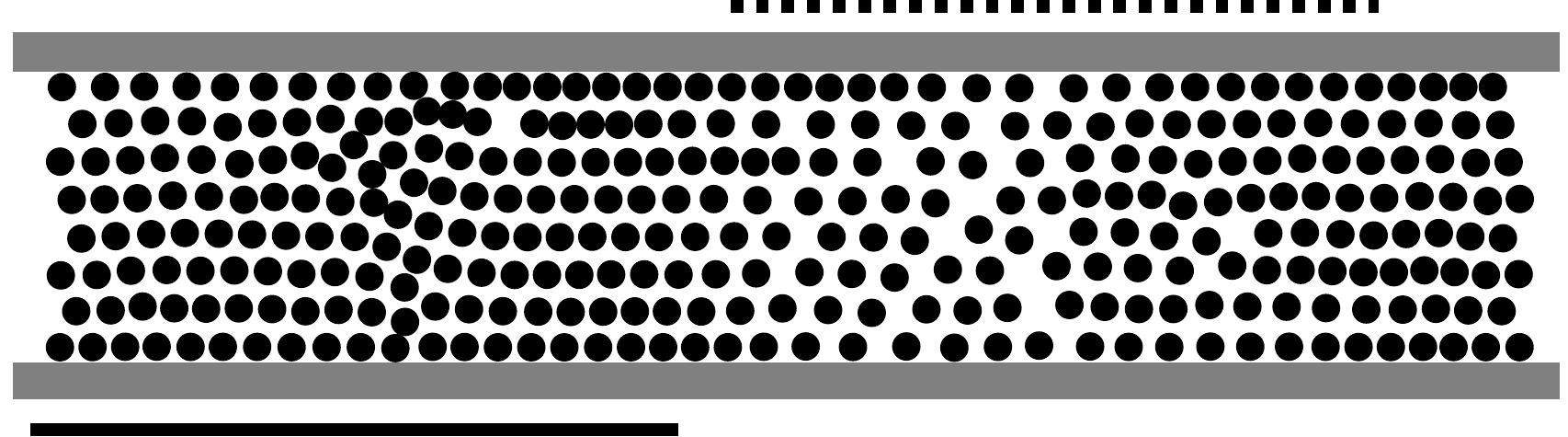}
\caption{\label{fig:stopandgo_scenario}Snapshot of a unidirectional pathway  with global density $\rho=4P/m^2$, dimension $50m\times 4m$, periodic boundary conditions, after $120$ simulated seconds and walking direction to the right. The normal and dashed lines mark slower and faster pedestrians, respectively: a stop-and-go wave.}
\end{figure}

The model also captures lane formation in bidirectional flow out of uniform initial conditions, as observed experimentally by \cite{zhang-2012b}. In the simulation, pedestrians walk bidirectionally in a 10m wide and 150m long pathway at a pedestrian density of $0.3Pm^{-1}$. They start on uniformly distributed positions at the left / right side and walk towards a target on the respective other end. Fig. \ref{fig:lanes_25m} shows that several lanes form. Due to the different desired velocities, many of them brake up after some time. When simulating with densities higher than $1Pm^{-2}$ in the whole pathway, pedestrians block each other and all movement stops.

\begin{figure}
\includegraphics[width=0.45\textwidth]{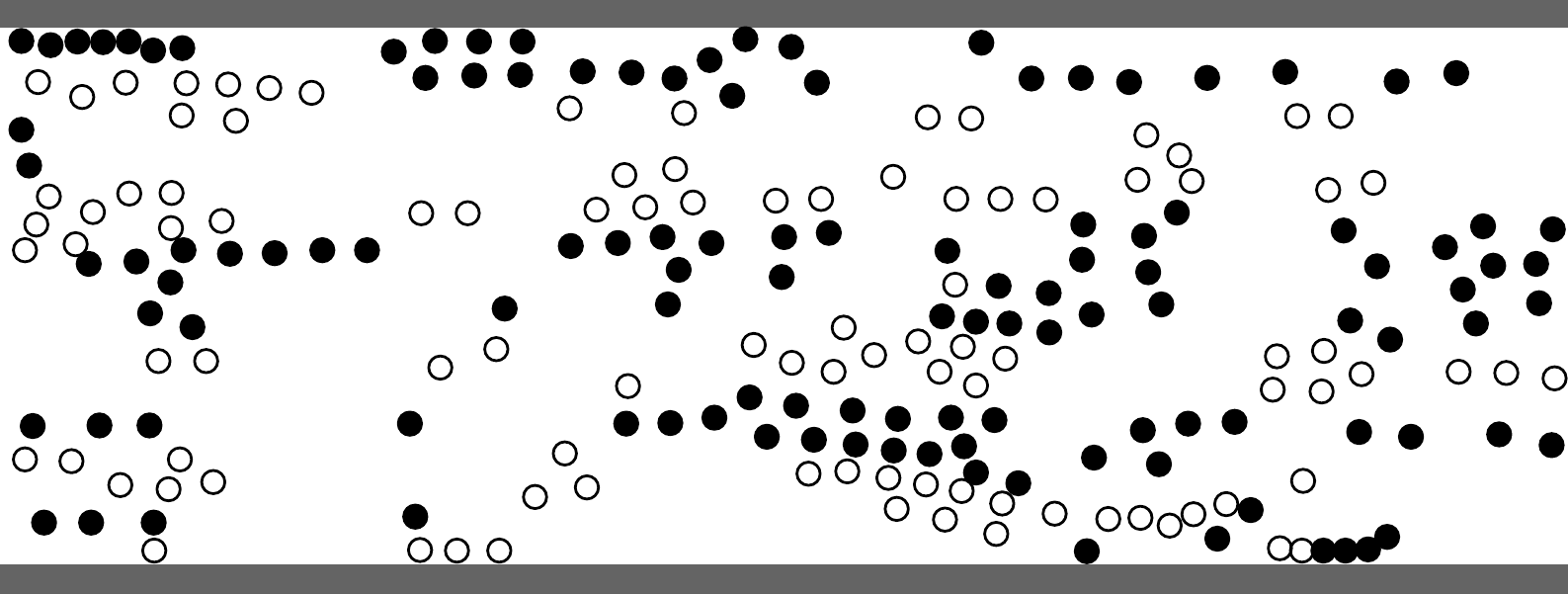}
\caption{\label{fig:lanes_25m}Formation of six lanes in bidirectional flow. Filled circles represent pedestrians walking to the left, empty circles represent pedestrians walking to the right. The walkway is 10m wide and 150m long. The snapshot shows a section of 25m.}
\end{figure}

%The Gradient Navigation Model also passes all tests proposed in \cite{rimea-2009}, except the ones including stairs (which were not part of the simulator software and therefore could not be tested).

%\insertfigure{rimea4_2d_stopandgo.pdf}{Unidirectional flow (1.9, 3.0, 3.75, 5 $Pm^{-1}$) in a 40m long and 4m wide walkway. Stop and go waves appear when the density in the pathway reaches a certain value, as discussed by \cite{johansson-2009}.}{0.4}

%\insertfigure{runtimes.pdf}{Runtimes of the simulator in a 40m long and 4m wide walkway (onedirectional) with differing numbers of pedestrians and cores. 200 seconds were simulated. Above 700 pedestrians, there is a drop in runtime as many pedestrians do not move anymore and the Dormand-Prince scheme adjusts the step size accordingly.}{0.45}
%The numerical scheme used to solve Eq. \ref{eq:GNMequations} is fifth order accurate and controls the step size in time automatically. To illustrate the performance improvements over methods like the simple Euler, we performed simulations in a unidirectional pathway. Fig. \ref{fig:runtimes.pdf} shows runtimes of the simulator.
%1. accuracy not reachable by euler in this time
%2. step size control kicks in at 700

%Compare with fundamental diagrams in several scenarios described in Zang (?) thesis.

%%%%%%%%%%%%%%%%%%%%%%%%%%%%%%%%%%%%%%%%%%%%%%%%%%%%%%%%%%%%%%%%%%%%%%
%% conclusion
%%%%%%%%%%%%%%%%%%%%%%%%%%%%%%%%%%%%%%%%%%%%%%%%%%%%%%%%%%%%%%%%%%%%%%
\section{conclusion}
We introduced a new ODE based microscopic model for pedestrian dynamics, the Gradient Navigation Model. We demonstrated that the model very well reproduces important crowd phenomena, such as  bottleneck scenarios, lane formation, stop-and-go waves and the speed-density relation. In the case of bottlenecks and the speed-density relation good agreement with experimental data was achieved. Calibration of the model parameters was performed using plausible assumptions on the outcome of benchmark scenarios rather than numerical tests. Recalibration for different scenarios was unnecessary.

One main goal for the model was to find a concise formulation with as few equations as possible and, at the same time, certain smoothness properties so that existence, uniqueness and smoothness of the  solution would follow directly. The GNM only needs three equations, as opposed to four in force based models, to describe motion of one pedestrian. In addition, we proposed a floor field to steer pedestrians instead of constructing paths or guiding lines. The floor field was computed by solving the eikonal equation using Sethian's highly efficient fast marching algorithm \cite{sethian-1999}. To achieve smoothness, mollification techniques were employed. The smoothness also enabled us to use numerical schemes of high order making the GNM computationally very efficient.

Two of the methods we introduced can easily be carried over to other models: The plausibility arguments that allowed us to calibrate free parameters hold independently of the model. The mollification techniques that led to the smooth functions could also be used by other differential equation based models like the Social Force Model \cite{helbing-1995, koster-2013} or the Generalized Centrifugal Force Model \cite{chraibi-2010}. 

Some of the most recent enhancements in crowd modeling rely on a floor field to steer pedestrians towards the target. Among them are steering around crowd clusters \cite{hughes-2001, hartmann-2014b} and more sophisticated navigation on the tactical and strategic level \cite{hoogendoorn-2004}. These developments can be employed in the GNM without any change to the equations of motion.

Some empirical observations, such as stop-and-go traffic \cite{helbing-2007, schadschneider-2011b} are not yet well understood, neither from the experimental nor the theoretical point of view. In a mathematical model stability issues and bifurcations often are at the root of such  phenomena. The concise mathematical formulation of the GNM as an ODE systems facilitates stability analysis and the investigation of bifurcations, both tasks that we are currently working on.

% If you have acknowledgments, this puts in the proper section head.
%\section*{Acknowledgments}
\begin{acknowledgments}
This work was partially funded by the German Ministry of Research through the project MEPKA (Grant No. 17PNT028). Support from the TopMath Graduate Center of TUM Graduate School at Technische Universit\"{a}t M\"{u}nchen, Germany, and from the TopMath Program at the Elite Network of Bavaria is gratefully acknowledged. We thank Mohcine Chraibi for his advice on the validation of the model.
\end{acknowledgments}

% Create the reference section using BibTeX:
\bibliography{Literature}

\appendix*

%%%%%%%%%%%%%%%%%%%%%%%%%%%%%%%%%%%%%%%%%%%%%%%%%%%%%%%%%%%%%%%%%%%%%%
%% Appendix A
%%%%%%%%%%%%%%%%%%%%%%%%%%%%%%%%%%%%%%%%%%%%%%%%%%%%%%%%%%%%%%%%%%%%%%
\section*{Appendix A}

\begin{comment}
\subsection{\label{sec:smoothspeed}Desired speed}
The desired speed $v$ of a pedestrian is dependent on local density. In this thesis, two versions of $v$ are used:
\begin{itemize}
\item $v_1(\rho)=1$
\item $v_2(\rho)=v^\text{des}(1-e^{-1.193(1/\rho-1/5.4)})$ with a positive constant $v^\text{des}$.
\end{itemize}
$v_1$ is obviously smooth. For $\rho>0$, $v_2$ is also smooth, but negative for $\rho > 5.4$ which is not desired. This can be resolved by applying a generalized logistic function \citep{garnier-1838}:
\begin{equation}
P_{s,k}(t)=\frac{1}{1+e^{-(t+s)/k}}
\end{equation}
Figure \ref{fig:logistic.pdf} shows the graph of $P$ with $s=0.5$, $k=0.1$.
Setting $s$ and $k$ to positive constants results in a function that when applied to $v_2$ does not change it very much in the interval $(0,5.4)$. The new function $P_{s,k}\circ v_2$ is smooth and positive in $(0,\infty)$.
\insertfigure{logistic.pdf}{A plot of the logistic function with parameters $s=0.5$ and $k=0.1$.}{0.3}
\end{comment}

\subsection{\label{sec:smoothg}Vector normalizer}
In order to design a function that smoothly scales a given vector to a length in $[0,1]$, a smooth ramp function is needed. The following chain of definitions is adopted from \cite{koster-2013}:
Let $r:\real\to\real$ be the ramp function defined by
\begin{equation}
r(x)=\begin{cases}
0&\text{for }x\leq 0\\
x&\text{for }x\in(0,1)\\
1&\text{for }x\geq 1\\
\end{cases}
\end{equation}
Then, a smooth version $r_\text{moll,p}$ with mollification parameter $p$ is given by
\begin{eqnarray}
\text{moll}(x,R,p)&=&\begin{cases}
e\cdot exp(\frac{1}{(\|x\|/R)^{2p}-1})&\text{for }\|x\|<R\\
0&\text{for }\|x\|\geq R
\end{cases}\\
r_\text{moll,p}(x)&=&\text{moll}(x,1,p)\cdot x +(1-\text{moll}(x,1,p))
\end{eqnarray}
where $p>1$. For this paper, we used $p=3$.
\begin{lemma}
The following two statements hold:
\begin{enumerate}
\item[(i)] $r_\text{moll,p}\in C^\infty(\real)$
\item[(ii)] $r_\text{moll,p}(x)=r(x)\ \forall x\in \real\setminus (0,1)$
\end{enumerate}
\end{lemma}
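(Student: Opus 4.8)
The plan is to verify the two claims separately, starting with the piecewise structure (ii) since it is immediate and helps organize the smoothness argument in (i).

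For part (ii), I would observe that $\text{moll}(x,1,p)=0$ whenever $\|x\|\geq 1$, so for $x\geq 1$ we get $r_{\text{moll},p}(x)=0\cdot x+(1-0)=1=r(x)$, and for $x\leq 0$ we likewise have $\|x\|\geq 0$; here one must be slightly careful, because $\|x\|<1$ is still possible for negative $x$ near $0$. So the correct statement to check is: for $x\leq 0$, even when $\|x\|<1$ and $\text{moll}(x,1,p)\neq 0$, the formula gives $r_{\text{moll},p}(x)=\text{moll}(x,1,p)\,x+(1-\text{moll}(x,1,p))$. For this to equal $r(x)=0$ we would need $\text{moll}(x,1,p)(x-1)=-1$, which is false in general. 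This suggests the definition as written is only consistent with (ii) if one reads $\text{moll}(x,1,p)$ as supported on $[0,1]$ rather than $[-1,1]$, i.e. the intended mollifier vanishes outside $(0,1)$. I would therefore state explicitly that $\text{moll}(\cdot,1,p)$ is taken to be supported in $(0,1)$ (as is visually the case in the companion reference \cite{koster-2013}), so that $\text{moll}(x,1,p)=0$ for all $x\in\real\setminus(0,1)$, and then (ii) follows directly: $r_{\text{moll},p}(x)=0+1=1$ for $x\geq 1$ and $r_{\text{moll},p}(x)=0+0$ would need the additive constant to vanish too — so in fact the cleanest fix is $r_{\text{moll},p}(x)=\text{moll}(x,1,p)\cdot x+(1-\text{moll}(x,1,p))\cdot r(x)$, blending between $x$ and the ramp. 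With that reading, (ii) is trivial on $\real\setminus(0,1)$ because $\text{moll}$ vanishes there and $r_{\text{moll},p}$ collapses to $r$.

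For part (i), the strategy is the standard one for bump-type functions: show $\text{moll}(\cdot,1,p)\in C^\infty(\real)$, and then conclude by algebraic closure of $C^\infty$ under sums and products (and composition with the polynomial $x$). The only nontrivial point is smoothness of $\text{moll}$ at the boundary $\|x\|=1$ (and, if the two-sided support is used, also nothing happens at $0$ since $\text{moll}$ is identically zero on a neighborhood there once we fix the one-sided support). I would write $u(t)=\exp\!\big(\tfrac{1}{t^{2p}-1}\big)$ for $t\in[0,1)$ and note that as $t\to 1^-$, $t^{2p}-1\to 0^-$, so $1/(t^{2p}-1)\to-\infty$ and $u(t)\to 0$; moreover every derivative $u^{(k)}(t)$ is a finite sum of terms of the form $\big(\text{rational function of }t\big)\cdot u(t)$, and since $u$ decays faster than any power of $(1-t)$, each such term tends to $0$. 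An induction on $k$ then gives that all one-sided derivatives of $\text{moll}$ at $\|x\|=1$ vanish and match the (zero) derivatives from outside, so $\text{moll}\in C^\infty(\real)$.

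The main obstacle, and the only place real care is needed, is the boundary-smoothness induction: making precise that derivatives of $e^{1/(t^{2p}-1)}$ are (rational)$\times$(exponential) and that the exponential's superpolynomial decay kills every pole, uniformly enough to conclude the one-sided limits are zero. This is the classical lemma behind $C^\infty$ bump functions, so I would either cite it (it is exactly the construction of the standard mollifier $\eta$ already invoked via \cite{evans-1997} earlier in the paper) or sketch the induction in one line. Everything else — the evaluation of $r_{\text{moll},p}$ outside $(0,1)$ and the closure of $C^\infty$ under the finitely many algebraic operations involved — is routine once the support convention is pinned down.
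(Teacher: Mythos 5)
Your treatment of (i) is correct and is essentially the paper's route: the paper simply cites smoothness of the standard mollifier (Evans) and implicitly uses closure of $C^\infty$ under sums and products; your boundary induction on the derivatives of $e^{1/(t^{2p}-1)}$ is exactly that cited fact spelled out. Your scrutiny of (ii) also has real merit: read literally on all of $\real$, the identity does fail for $x<0$ (for $x\in(-1,0)$ the formula gives roughly $x+x^{2p}(1-x)\neq 0$, and for $x\le -1$ it gives $1\neq 0$), while it does hold at $x=0$ because $\text{moll}(0,1,p)=e\cdot e^{-1}=1$, and on $[1,\infty)$ because $\text{moll}$ vanishes there. The paper's ``trivial'' is only justified because $r_{\text{moll},p}$ is ever evaluated only at norms, i.e.\ on $\real^+_0$, where $\real^+_0\setminus(0,1)=\{0\}\cup[1,\infty)$ and the two evaluations above settle the claim.

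The genuine flaw is your proposed repair. With $\text{moll}(\cdot,1,p)$ supported in $(0,1)$ and $r_{\text{moll},p}(x)=\text{moll}(x,1,p)\,x+(1-\text{moll}(x,1,p))\,r(x)$, you are blending $x$ with $r(x)$ on a region where $r(x)=x$, so the convex combination is identically $x=r(x)$ there; outside the support the function equals $r(x)$ by construction. Hence your ``fixed'' function is exactly the ramp $r$, which has kinks at $0$ and $1$ and is not $C^\infty$: you would secure (ii) at the price of falsifying (i). The point of the original definition is precisely that near $x=1$ it blends $x$ with the constant $1$ (not with $r$), which is what removes the kink at $1$ while leaving the value $0$ at the origin intact. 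The correct resolution is therefore not to change the definition but to keep it as written and read (or restate) (ii) on $\real^+_0$, the only domain on which $r_{\text{moll},p}$ is applied in the definition of $g$; with that restriction your verification for $x\ge 1$ plus the evaluation $\text{moll}(0,1,p)=1$ at $x=0$ completes the paper's one-line argument.
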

\begin{proof}
(i) holds since the standard mollifier is smooth: $\text{moll}(x,R,p)\in C^\infty$ \cite{evans-1997}. (ii) is trivial from the definitions of $r$ and $r_{\text{moll,p}}$.
\end{proof}
The desired scaling function $g$ can now be defined as follows:
\begin{eqnarray*}
g:\real^2&\to&\real^2\\
x&\mapsto&\begin{cases}
(0,0)^T&\text{for }\|x\|=0\\
x/\|x\|\cdot r_{moll,p}(\|x\|)&\text{for }\|x\|>0
\end{cases}
\end{eqnarray*}
For a similar, one-dimensional version, for example for smoothing $max(0,x)$ with $x\in[-1,1]$, the logistic function can be used:
\begin{equation}\label{eq:tildeg}
\tilde{g}(x;x_0,R)=\frac{1}{1 + e^{-(x-x_0) / R}}
\end{equation}
with $x,x_0,R\in\mathbb{R}$. In this paper, we choose $x_0=0.3$ and $R=0.03$ to smooth the influence of the viewing angle.

\end{document}